\documentclass[envcountsame,oribibl]{llncs}
\usepackage{amsmath,cite,amssymb,color,trig,tikz,setspace,subfig,appendix,enumerate,xcolor}

\usepackage{graphicx}

\newcommand{\E}{\mathcal{E}}

\newcommand{\G}{\mathcal{G}}
\newcommand{\N}{\mathcal{N}}
\newcommand{\M}{\mathcal{M}}
\newcommand{\D}{\mathcal{D}}
\newcommand{\A}{\mathcal{A}}
\newcommand{\Z}{\mathcal{Z}}
\newcommand{\Q}{\mathcal{Q}}

\newcommand{\Ss}{\mathcal{S}}
\newcommand{\T}{\mathcal{T}}
\newcommand{\B}{\mathcal{B}}
\newcommand{\Cu}{\hat C}

\DeclareMathOperator*{\argmin}{arg\,min}

\definecolor{ceruleanblue}{rgb}{0.16, 0.32, 0.75}



\begin{document}

\newcommand{\localtextbulletone}{\textcolor{gray}{\raisebox{.45ex}{\rule{.6ex}{.6ex}}}}
\renewcommand{\labelitemi}{\localtextbulletone}

\title{A Supervisory Control Approach to Dynamic Cyber-Security}
\author{Mohammad Rasouli\and Erik Miehling\and Demosthenis Teneketzis}
\institute{Department of Electrical Engineering and Computer Science\\University of Michigan, Ann Arbor, MI}
\date{}
\maketitle

\begin{abstract}
An analytical approach for a dynamic cyber-security problem that captures progressive attacks to a computer network is presented. We formulate the dynamic security problem from the defender's point of view as a supervisory control problem with imperfect information, modeling the computer network's operation by a discrete event system. We consider a min-max performance criterion and use dynamic programming to determine, within a restricted set of policies, an optimal policy for the defender. We study and interpret the behavior of this optimal policy as we vary certain parameters of the supervisory control problem.
\end{abstract}

\begin{keywords} 
\end{keywords} Cyber-Security, Computer Networks, Discrete Event Systems, Finite State Automata, Dynamic Programming

\section{Introduction}
\label{Introduction}

Cyber-security has attracted much attention recently due to its increasing importance in the safety of many modern technological systems. 
These systems are ubiquitous in our modern day life, ranging from computer networks, the internet, mobile networks, the power grid, and even implantable medical devices. This ubiquity highlights the essential need for a large research effort in order to strengthen the resiliency of these systems against attacks, intentional and unintentional misuse, and inadvertent failures.

The study of cyber-security problems in the existing literature can be divided into two main categories: \emph{static} and \emph{dynamic}. 

Static problems concern settings where the agents, commonly considered to be an attacker and a defender, receive no new information during the time horizon in which decisions are made. 
Problems of this type in the security literature can largely be classified under the category of \emph{resource allocation}, where both the defender and attacker make a single decision as to where to allocate their respective resources. The main bodies of work involve infrastructure protection \cite{bier2007,bohme2010,hart2008discrete} and mitigation of malware and virus spread in a network \cite{bloem2007, bloem2009,chen2006, mastroleon2009}. 
 Some of the above works consider settings where the agents are strategic \cite{bier2007,hart2008discrete}. The presence of strategic agents results in a game between the attacker and defender. The strategic approaches in the above works are commonly referred to as \emph{allocation games}. The survey by Roy et al. \cite{roy2010survey}, as well as \cite{blotto}, provide useful outlines of some static game models in security.

Dynamic security problems are those that evolve over time, with the defender taking actions while observing some new information from the environment.\footnote{This new information could consist of the attacker's actions, events in nature, or the state of a some underlying system.} The formulation of a security problem as a dynamic problem, instead of a static one, offers numerous advantages. The first advantage is clear; since real-world security problems have an inherently dynamic aspect, dynamic models can more easily capture realistic security settings, compared to static models. Also, most attacks in cyber-security settings are \emph{progressive}, meaning more recent attacks build upon previous attacks (such as denial-of-service attacks, brute-force attacks, and the replication of viruses, malware, and worms, to name a few). This progressive nature is more easily modeled in a dynamic setting than in a static setting. 

The literature within the dynamic setting can be further subdivided into two areas: models based on control theory \cite{rowe2012,khouzani2012maximum,schneider2000,ligatti2005,ligatti2009} and models based on game theory \cite{khouzani2012saddle,yin2010,van2013,roy2010survey}.

The control theory based security models in the literature differ in the ways in which the dynamics are modeled. The work by Khouzani et al. \cite{khouzani2012maximum} studies the problem of a malware attack in a mobile wireless network; the dynamics of the malware spread are modeled using differential equations. 
A large part of the literature on control theory based models focuses on problems where the dynamics are modeled by finite state automata. The works of  \cite{ligatti2005,ligatti2009,schneider2000} implement specific control policies (protocols) for security purposes. The work of Schneider \cite{schneider2000} uses a finite state automaton to describe a setting where signals are sent to a computer. Given a set of initial possible states, the signals cause the state of the computer to evolve over time. An entity termed the \emph{observer} monitors the evolution of the system and enforces security in real-time.
Extensions of Schneider's model are centered around including additional actions for the observer. Ligatti et al. \cite{ligatti2005} extend Schneider's model by introducing a variety of abstract machines which can edit the actions of a program, at run-time, when deviation from a specified control policy is observed. More recent work \cite{ligatti2009} develops a formal framework for analyzing the enforcement of more general policies. 
Another category of dynamic defense concerns scenarios where the defender selects an \emph{adaptive attack surface}\footnote{For example, changing the network topology.} in order to change the possible attack and defense policies. A notion termed \emph{moving target defense} (a term for dynamic system reconfiguration) is one class of such dynamic defense policies.  The work of Rowe et al. \cite{rowe2012} develops control theoretic mechanisms to determine maneuvers that modify the attack surface in order to mitigate attacks. The work involves first developing algorithms for estimation of the security state of the system, then formalizing a method for determining the cost of a given maneuver. 
The model uses a logical automaton to describe the evolution of the state of the system; however, it does not propose an analytical approach for \emph{determining} an optimal defense policy.

The next set of security models in the literature are based on the theory of dynamic games. 
The work in \cite{lye2005} considers a stochastic dynamic game to model the environment of conflict between an attacker and a defender. In this model, the state of the system evolves according to a Markov chain.  This paper has many elements in common with our model; however, it assumes the attacker and defender have perfect observations of the system state. In our paper, we consider the problem from the defender's point of view and assume that the defender has imperfect information about the system state. The work by Khouzani \cite{khouzani2012saddle} studies a zero-sum two-agent (malware agent and a network agent) dynamic game with perfect information. The malware agent is choosing a strategy which trades off malware spread and network damage while the network agent is choosing a counter-measure strategy. The authors illustrate that \emph{saddle-point strategies} exhibit a threshold form. 
%
%
%
The work of Yin et al. \cite{yin2010} (dynamic game version of \cite{bier2007}) studies a Stackelberg game where the defender moves first and commits to a strategy. 
The work addresses how the defender should choose a strategy when it is uncertain whether the attacker will observe the first move. 
Van Dijk et al. \cite{van2013} propose a two player dynamic game, termed \emph{Flipit}, which models a general setting where a defender and an attacker fight (in continuous time) over control of a resource. The results concern the determination of scenarios where there exist dominant strategies for both players.
We refer the reader to Roy et al. \cite{roy2010survey}, and references therein, for a survey on the application of dynamic games to problems in security. 

While models based on game theory have generated positive results in the static setting, there has been little progress in the dynamic setting. We believe this is for two reasons; first, dynamic security has not been fully investigated in a non-strategic context and second, the results in the theory of dynamic games are limited. 

In this paper, we develop a (supervisory) control theory approach to a dynamic cyber-security problem and determine the optimal defense policy against progressive attacks. We consider a network of $K$ computers, each of which can be in one of four security states, as seen in Figure \ref{fig:network}. The state of the system is the $K$-tuple of the computer states and evolves in time with both defender and attacker actions. We use a finite state logical automaton to model the dynamics of the system. The defender adjusts to attacks based on the information available.

\begin{figure}[!htbp]
\begin{center}  
\includegraphics [width=1\textwidth,height=\textheight,keepaspectratio]{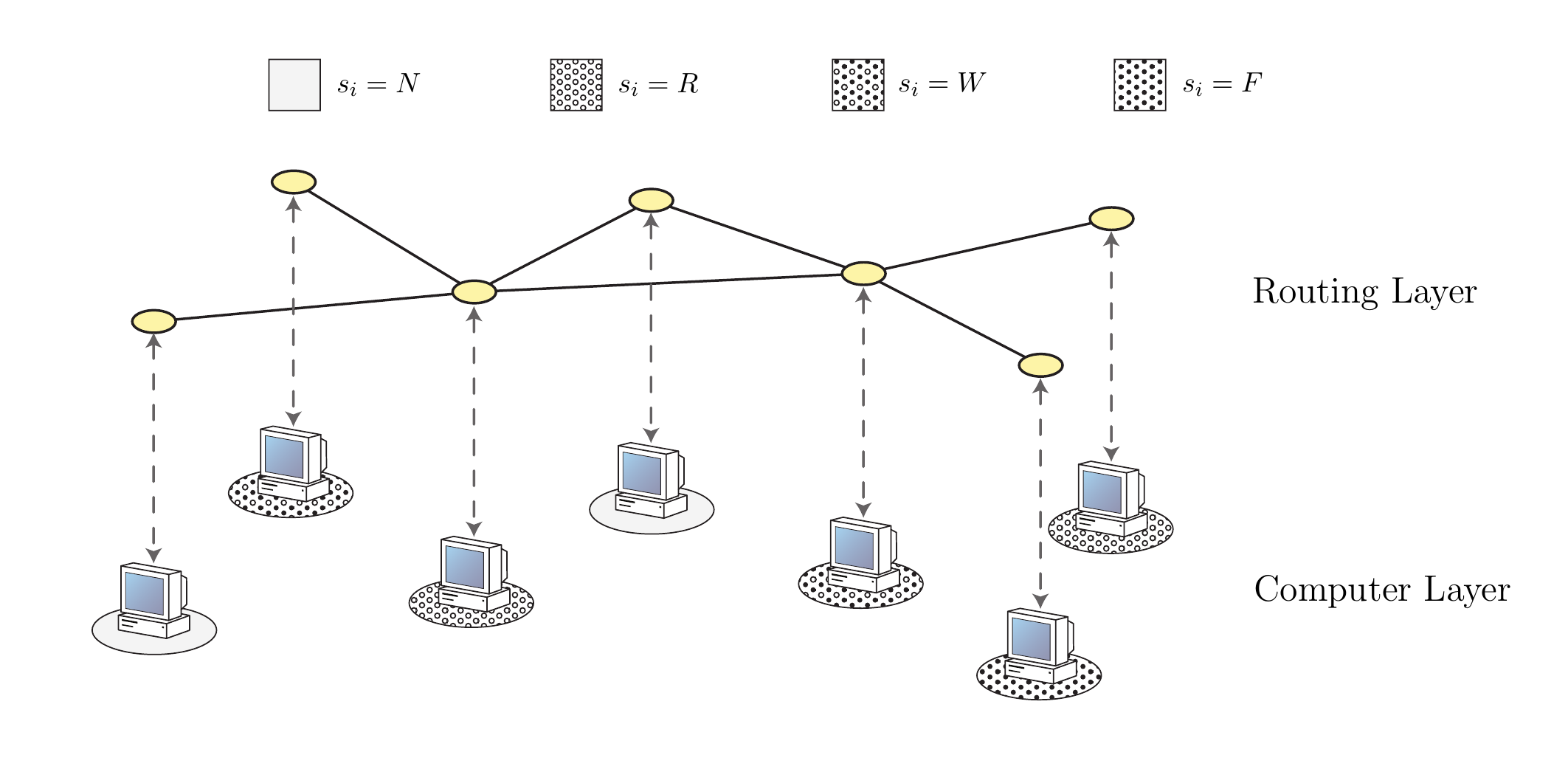}  
\caption{\small \sl An instance of the problem that we consider. Computers are connected through a routing layer. Each computer can be in one of four security states: \emph{normal} (N), \emph{compromised} (R), \emph{fully compromised} (W), or \emph{remote compromised} (F).} \label{fig:network}
\end{center}  
\end{figure}

Our model takes a different approach than the existing papers in the literature. One fundamental difference of our work from the existing literature that make use of automata is the development of an \emph{analytical} framework for \emph{determining optimal defense policies} within a restricted set of policies. Other works involving automata propose methods for \emph{enforcing} a predetermined policy, rather than determining an optimal policy. Also, our control theoretic approach considers imperfect information regarding attacker actions, which we feel is an aspect that is engrained into security problems.

\subsection{Contribution}

The contribution of this paper is the development of a formal model for analyzing a dynamic cyber-security problem from the defender's point of view. Our approach has the following desirable features: (i) It captures the progressive nature of attacks; (ii) It captures the fact that the defender has imperfect knowledge regarding the state of the system; this uncertainty is a result of the fact that all attacks are uncontrollable and most are unobservable, by the defender; (iii) It allows us to quantify the cost incurred at every possible state of the system, as well as the cost due to every possible defender action; (iv) It allows us to quantify the performance of various defender policies and to determine the defender's optimal control policy, within a restricted set of policies, with respect to a min-max performance criterion.

\subsection{Organization}

The paper is organized as follows. In Section \ref{sec:model} we discuss our dynamic defense model. This is done by introducing the assumptions on the computer network and corresponding state, as well as the events which drive the evolution of the system state. In Section \ref{sec:formulation}, we model the defender's problem of keeping the computer network as secure as possible while subjected to progressive attacks. We provide a simplified problem formulation that is tractable. In Section \ref{sec:solution}, we determine an optimal control policy for the defender based on dynamic programming. We discuss the nature of the optimal policy in Section \ref{sec:policy}. We offer conclusions and reflections in Section \ref{sec:conclusion}.

\section{The Dynamic Defense Model}
\label{sec:model}

The key features of our model are characterized by assumptions {\bf (A1)} -- {\bf (A6)}. We first describe the assumptions related to the \emph{computer network}, discussed in assumption {\bf (A1)}. In assumption {\bf (A2)} we introduce the notion of the computer network \emph{system state}. Next, in assumptions {\bf (A3)} -- {\bf (A5)}, we discuss the \emph{events} that can occur within the system. We describe how the events cause the system state to evolve, as well as specify which events are controllable and observable by the defender. In {\bf (A6)} we discuss an assumption on the rules of interaction between the attacker and the defender. As mentioned in the introduction, we consider the cyber-security problem from the defender's viewpoint; the model we propose reflects this viewpoint.\\

\noindent {\bf \emph{Assumption 1 - Computer Network}}: \emph{We assume a set of networked computers, $\N=\{1,2,\ldots,K\}$. Each computer, $i\in\N$, can be at security level $z^i\in\M = \left\{N,R,W,F\right\}$ where $\M$ is the set of security states.}
\vspace{0.5em}

Each computer, $i\in\N$, is assumed to have three security boundaries, denoted by $\B = \{B_1,B_2,B_3\}$, representative of a layered structure to its security. These security boundaries partition the set of security states $\M$. Throughout this paper, we assume that the set of security states $\M= \left\{N,R,W,F\right\}$ is defined as follows.\\
\vspace{-0.5em}
\begin{itemize}
\item \emph{Normal} ($z^i=N$): Computer $i$ is in the \emph{normal} state if none of the security boundaries have been passed by the attacker. 

\item \noindent \emph{Compromised} ($z^i=R$): Computer $i$ is \emph{compromised} when security boundary $B_1$ has been passed by the attacker. In this state, the attacker has exploited some vulnerability on the computer
and has managed to obtain user-level access privilege to the computer.

\item \noindent \emph{Fully Compromised} ($z^i=W$): Computer $i$ is \emph{fully compromised} when both boundaries $B_1$ and $B_2$ have been passed by the attacker. The attacker has exploited some additional vulnerability on the computer and has managed to obtain root level or execute privilege to the computer. 

\item \noindent \emph{Remote Compromised} ($z^i=F$): Computer $i$ is \emph{remote compromised} when all security boundaries $B_1$, $B_2$, and $B_3$ have been passed by the attacker. The attacker has managed to obtain enough privileges to attack another computer and obtain user-level access privilege on that computer.
\end{itemize}

\noindent {\bf \emph{Assumption 2 - System State}}: \emph{We assume that the computer network operates over an infinite time horizon, $\T=\{0,1,2,\ldots\}$. The state of the computer network, $Z_t$, which evolves with time $t\in\T$, is the combination of the states of all the computers at time $t$. Each state $Z_t$ has a corresponding cost.}
\vspace{0.5em}

The state of the network, denoted $Z_t = (z_t^1,z_t^2,\ldots,z_t^K)\in\Z$, is a $K$-tuple of all of the computer states.\footnote{For example, a three computer network could have a network state of $Z_t'=(N,R,W)$. Notice that state $Z_t'$ is distinct from state $Z_t'' = (R,N,W)$.} The set $\Z$ denotes the set of all possible states, $\Z = \{Z^1,Z^2,\ldots,Z^{|\M|^K}\}=\left\{(N,N,\ldots,N),(N,N,\ldots,R),\ldots,(F,F,\ldots,F)\right\}$, where $|\M|^K$ is the number of system states. 

The cost of the network state $Z_t$ is defined by the costs of the states of the computers. We assign a cost, $c(z_t^i)$, to each computer $i$ depending upon its state $z_t^i\in\M$. This cost is defined as follows
\begin{align}
	c(z_t^i) = \left\{\begin{array}{ll}   
    		c_N &  \text{ if } z_t^i = N\\
		c_R &  \text{ if }z_t^i = R \\		
		c_W &  \text{ if }z_t^i = W \\
		c_F &  \text{ if }z_t^i = F
	\end{array}\right.\label{eq:computercost}
\end{align}
with $0\le c_N<c_R<c_W<c_F<\infty$. The cost of state $Z_t$ is then defined as
\begin{align}
	C_{Z_t} = \sum_{i\in\N}c(z_t^i)\label{eq:statecost}
\end{align}
The state of the network, $Z_t$, evolves in time due to events, which we discuss in the next set of assumptions.\\

\noindent {\bf \emph{Assumption 3 - Events}}: \emph{There is a set of events, $\E = \A\cup\D$, where $\A$ are the attacker's actions and $\D$ are the defender's actions.}
\vspace{0.5em}

We assume that the attacker has access to three types of actions. The set of attacker actions, $\A=\left\{N^a,\{P^i_n\}_{i\in\N,n\in\B},\{H^{ij}\}_{i,j\in\N}\right\}$, is defined as follows.

\begin{itemize} 
\item \emph{$N^a$, null}: The attacker takes no action. The null action does not change the system state and is admissible at any state of a computer.
\item \emph{$P^i_n$, security boundary attack}: Attacking the $n^\text{th}$ security boundary of computer $i$ causes the security state of computer $i$ to transition across the $n^\text{th}$ security boundary. Specifically, $P^i_{B_1}$ causes computer $i$ to transition from normal, $z^i = N$, to compromised, $z^i = R$; $P^i_{B_2}$ from $z^i = R$ to $z^i = W$; and $P^i_{B_3}$ from $z^i=W$ to $z^i = F$.  Actions $P_{B_1}^i$, $P_{B_2}^i$, and $P_{B_3}^i$ are only admissible from states $z^i = R$, $z^i = W$, and $z^i = F$, respectively.

\item \emph{$H^{ij}$, network attack}:  Using a computer $i$ in state $z^i = F$ to attack any other normal or compromised computer $j$ in the network that is in state $z^j = \{N,R\}$ to bring computer $j$ to state $z^j = W$. The action $H^{ij}$ is admissible at state $z^i = F$ for $z^j\in\{N,R,W\}$.
\end{itemize}
We assume that the defender knows the set $\A$ as well as the resulting state transitions due to each action in $\A$.

The defender has access to three types of costly actions. These actions are admissible at any computer state. The set of defender actions, denoted by $\D=\left\{N^d,\{E_i\}_{i\in\N},\{R_i\}_{i\in\N}\right\}$, is defined as follows.
\begin{itemize}
\item \emph{$N^d$, null}: The defender takes no action. The null action does not change the system state.
\item \emph{$E_i$, sense computer $i$}: The \emph{sense} action, $E_i$, reveals the state of computer $i$ to the defender. The sense action does not change the system state.
\item \emph{$R_i$, re-image computer $i$}: The \emph{re-image} action, $R_i$, brings computer $i$ back to the normal state from any state that it is currently in. For example, $R_3$ applied to state $(N,R,F)$ results in $(N,R,N)$.
\end{itemize}
The costs of the actions in $\D$ are defined by $\Cu(N^d)$, $\Cu(E_i)$, $\Cu(R_i)$, where $0\le\Cu(N^d)<\Cu(E_i)<\Cu(R_i)<\infty$ for all $i\in\N$.\\

\noindent {\bf \emph{Assumption 4 - Defender's Controllability of Events}}: \emph{The actions in $\A$ are uncontrollable whereas the actions in $\D$ are controllable.}
\vspace{0.5em}

Since the problem is viewed from the perspective of the defender, all actions in $\D$ are controllable. For the same reason, the defender is unable to control any of the attacker's actions $\A$.\\

\noindent {\bf \emph{Assumption 5 - Defender's Observability of Events}}: \emph{All actions in $\D$ and some actions in $\A$ are assumed to be observable.}
\vspace{0.5em}

Again, due to taking the defender's viewpoint, all actions in $\D$ are observable. Although we assume that the defender knows the set $\A$, we assume that it cannot observe $N^a$ or any $P^i_n$ actions; it can only observe actions of the type $H^{ij}$. One justification for this is that the the network attack $H^{ij}$ involves passing sensitive information of computer $j$ through the routing layer of the system to computer $i$.\footnote{This sensitive information could be the login credentials of computer $j$.} We assume that the routing layer is able to detect the transfer of sensitive data through the network, and thus the defender is aware when an action of the form $H^{ij}$ occurs.\\

\noindent {\bf \emph{Assumption 6 - Defender's Decision Epochs}}: \emph{The defender acts at regular, discrete time intervals. At these time intervals, the defender takes only one action in $\D$. The attacker takes one action in $\A$ between each defender action.}
\vspace{0.5em}

We require that the defender should consider taking a single action in $\D$ at regular time instances. We assume that between any two such instances, the attacker can only take one 
action in $\A$. This order of events is illustrated in Figure \ref{fig:timeline} for a given time $t=\tau$. We introduce \emph{intermediate states}, denoted by $\tilde \Z = (\tilde Z^1,\tilde Z^2,\ldots,\tilde Z^{|\M|^K})$, which represent the system states at which events from $\A$ are admissible (that is, the states in which the attacker takes an action). The \emph{system states}, denoted by $\Z = (Z^1,Z^2,\ldots,Z^{|\M|^K})$, are the states at which actions from $\D$ are admissible. 
\begin{figure}[!htbp]
\begin{center}  
\includegraphics [width=0.75\textwidth,height=\textheight,keepaspectratio] {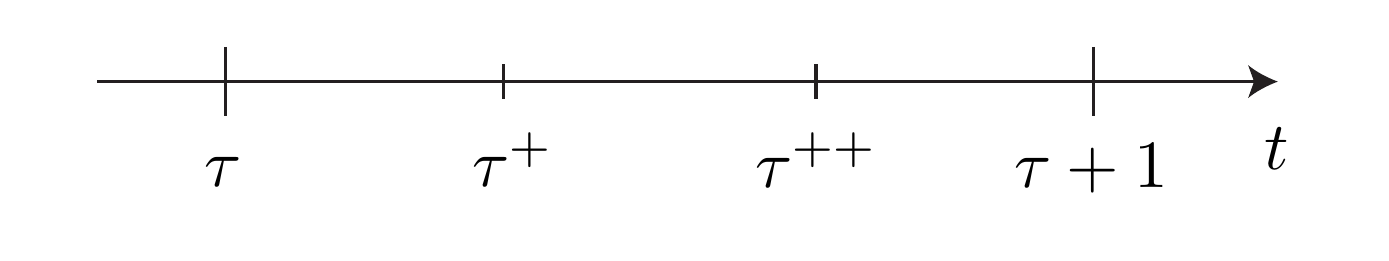}  
\caption{\small \sl Order of events for a given time-step. At time $t=\tau$, the cost of the current state $C_{Z_\tau}$ is realized. At $\tau^+$,  the defender takes an action in $\D$ (the cost of which is realized immediately). The resulting system state due to the defender's action is denoted by $\tilde Z_{\tau^+}\in\tilde\Z$. At $\tau^{++}$ the attacker takes an action in $\A$. At $\tau+1$, the resulting system state is denoted by $Z_{\tau+1}\in\Z$.} 
\label{fig:timeline}
\end{center}  
\end{figure}

Assumption {\bf (A6)} is, in our opinion, reasonable within the security context. Since time has value in security problems,\footnote{A computer that is compromised by the attacker for two time steps is more costly to the defender than a computer that is compromised for one time step.} the defender should take actions at regular time intervals (note that at these instances the defender may choose $N^d$, that is, choose to do nothing). In general, a finite number of events in $\A$ may occur between any two successive defender actions; however, to reduce the dimensionality of the problem, we assume that only one event in $\A$ can occur.\\

One important implication of assumption {\bf (A6)} is related to the defender's observability of events in $\A$. By {\bf (A6)}, the defender is aware \emph{when} an event in $\A$ occurs. Since the event $H^{ij}$ is observable, if the defender does not observe $H^{ij}$ when an event in $\A$ is known to occur, then it knows that one of the unobservable events, $N^a$ or one of $\{P_n^i\}_{i\in\N,n\in\B}$, has occurred. To incorporate this fact into the defender's knowledge about the system's evolution, we group the above mentioned unobservable events into one event, denoted $X=\left\{N^a,\{P^i_n\}_{i\in\N,n\in\B}\right\}$. This philosophy is used in constructing the system automaton from the defender's point of view, as well as in defining the defender's information state (discussed in Section \ref{sec:formulation}). As a result of the above grouping, the set of events $\A'= \left\{X, \{H^{ij}\}_{i,j\in\N}\right\}$ is observable by the defender. Notice, however, that by performing this grouping, we have introduced non-determinism into the system; that is, the event $X$ can take the system to many possible system states. All unobservable events in the problem have been eliminated due to Assumption {\bf (A6)} and the grouping of unobservable events in $\A$.

As a result of assumptions {\bf (A1)} -- {\bf (A6)}, the evolution of the system state, $Z_t$, from the defender's viewpoint, can be modeled by a discrete event system represented by a finite state automaton, which we term the \emph{system automaton}. 
Due to assumption {\bf (A6)}, we duplicate the system states by forming the set of intermediate states, denoted by $\tilde \Z = (\tilde Z^1,\tilde Z^2,\ldots,\tilde Z^{|\M|^K})$. The set of intermediate states represents the states at which an event from $\A$ can occur. The set of system states, denoted by $\Z$, are the states at which the defender takes an action $d\in\D$. The resulting automaton has $2|\M|^K$ states. The set of events that can occur is described by the set $\E' = \A'\cup\D$; the transitions due to these events follow the rules discussed in assumption {\bf (A3)}. The system automaton takes the form of a bipartite graph, as seen in Figure \ref{fig:bipartite}.
\begin{figure}[!htbp]
\begin{center}  
\includegraphics [width=1\textwidth,height=\textheight,keepaspectratio] {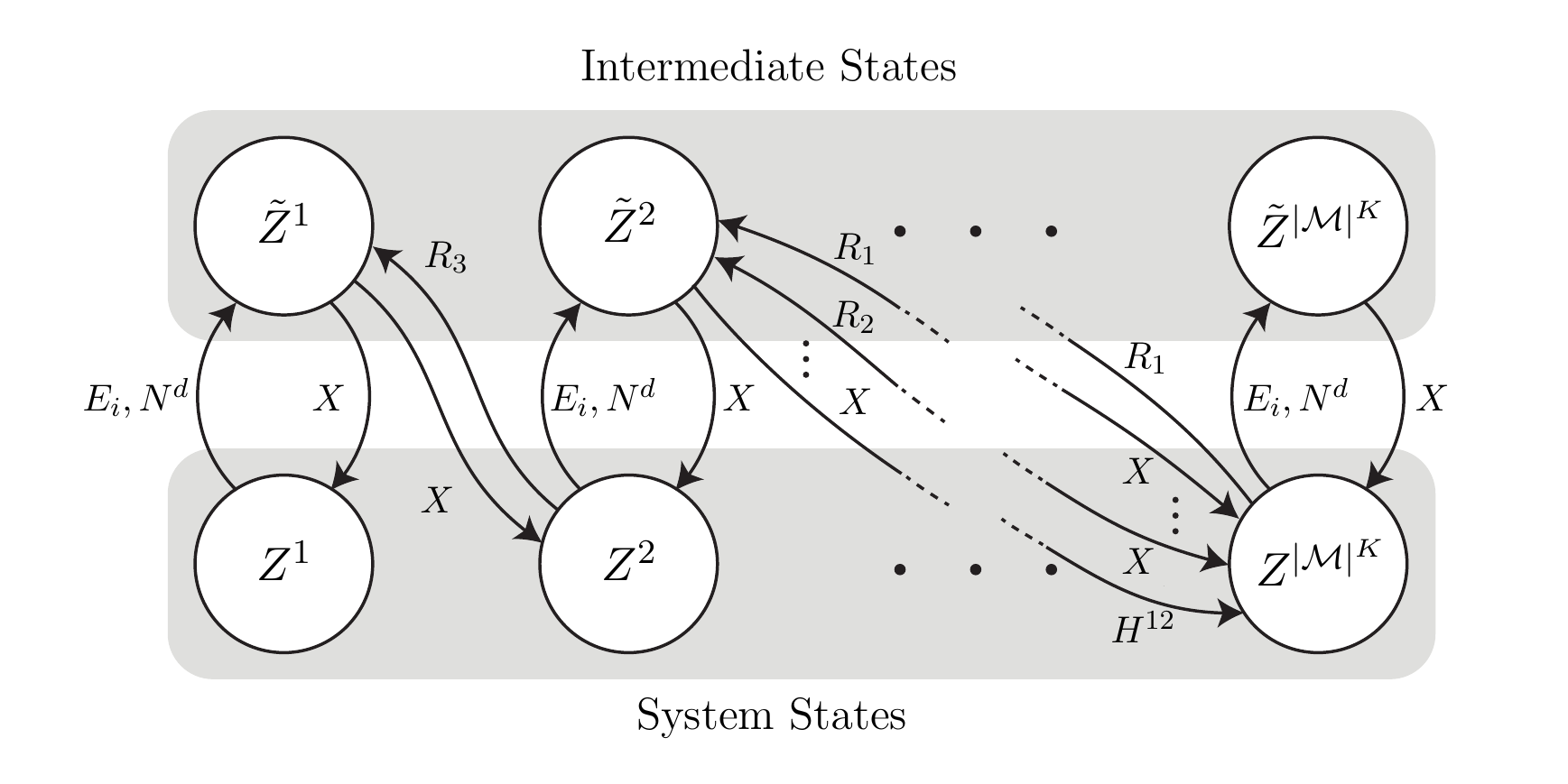}  
\caption{\small \sl The system automaton represented as a bipartite graph of intermediate states, $\tilde \Z = (\tilde Z^1,\tilde Z^2,\ldots,\tilde Z^{|\M|^K})$, and system states, $\Z = (Z^1,Z^2,\ldots,Z^{|\M|^K})$, with events $\E'=\A'\cup\D$. Notice the non-determinism of the event $X\in\A'$.} \label{fig:bipartite}
\end{center}  
\end{figure}
\noindent Notice that, like the null action, the sense actions, $E_i$, for all $i\in\N$, do not change the underlying system state. The purpose of sense is to update the defender's information state, which will be defined and explained in the following section.

\section{The Defender's Problem}
\label{sec:formulation}

We now formulate the defender's problem -- protecting the computer network. The defender must decide which costly action to take, at each time step, in order to keep the system as secure as possible given that it has imperfect knowledge of the network's state. 


\subsection{The Defender's Optimization Problem}
\label{ssec:minmax}

Let $g:=\{g_t, t\in\T\}$, denote a control policy of the defender, where
\begin{align}
	g_t:\D^{t-1}\times{\A'}^{t-1}\to \D\label{eq:g},
\end{align}
and $\D^{t-1}$ and ${\A'}^{t-1}$ denote the space of the defender's actions and observations up to $t-1$, respectively. Let $\G := \{g\,|\,g_t:\D^{t-1}\times{\A'}^{t-1}\to\D\text{ for all }t\in\T\}$ denote the space of admissible control policies for the defender.

The defender's optimization problem is
\begin{align}
\min_{{g\in\G}} \max_{\{{Z_t^g\in\Z},t\in\T\}}& \left\{\sum_{t\in\T} \beta^t  \bigg[C_{Z_t^g}+\hat{C}\big(d_t\big)\bigg]\right\}\label{eq:defprob}\tag{$P_D$}\\
\nonumber  \text{subject to }&\,\,\, \text{Assumptions  {\bf (A1)} -- {\bf (A6)}}
\end{align}
where $\{{Z_t^g\in\Z},t\in\T\}$ denotes a sequence of states generated by control policy $g$ and $d_t$ is the defender's action at $t$ generated according to Equation (\ref{eq:g}). Problem (\ref{eq:defprob}) is a supervisory control problem with imperfect observations.

\subsection{Discussion of Problem (\ref{eq:defprob})}
\label{ssec:discpd}

The notion of an information state \cite{kumar1986stochastic} is a key concept in supervisory (and general) control problems with imperfect information. Because of the nature of the performance criterion and the fact that the defender's information is imperfect, an appropriate information state for the defender at time $t$ is $\sigma(\D^{t-1},{\A'}^{t-1})$, the $\sigma$-field generated by the defender's actions and observations, respectively, up to $t-1$. Using such an information state, one can, in principle, write the dynamic program for Problem (\ref{eq:defprob}). Such a dynamic program is computationally intractable. 
For this reason, we formulate another problem, called $(P_D')$, where we restrict attention to a set of defense policies that have a specific structure; in this problem we can obtain a computationally tractable solution.


\subsection{Specification of Problem $(P_D')$}
\label{ssec:infostate}

We define the \emph{defender's observer} as follows. The defender's observer is built using the defender's observable events, $\A'$, and its actions, $\D$.
The observer's state at time $t$, denoted by $S_t\subseteq \Z$, consists of the possible states that the network can be in at time $t$ from the defender's perspective. We denote by $\Ss$ the space to which $S_t$ belongs, for any $t\in\T$.


The evolution of the observer's state is described by the function $f:\Ss\times\D \times \A' \to\Ss$.  The observer's state $S_t$ follows the update
\begin{align*}
S_{t+1}= f(S_t,d_t,a_t')	
\end{align*}
where $d_t\in\D$ is the realization of the defender's action and its effect at time $t^+$, and $a_t'\in \A'$ is the realization of the defender's observation at $t^{++}$. The precise form of the function $f$ is determined by the dynamic defense model of Section 2. Thus, the dynamics of the defender's observer are described by a finite state automaton with state space $\Ss$ and transitions that obey the dynamics defined by the function $f(S_t, d_t, a_t')$.

Using the defender's observer we formulate Problem $(P_D')$ as follows.
\begin{align}
\min_{{g\in\G'}} \max_{{Z_t^g\in\Z},t\in\T}& \left\{\sum_{t\in\T} \beta^t  \bigg[C_{Z_t^g}+\hat{C}\big(d_t\big)\bigg]\right\}\label{eq:defprobprime}\tag{$P_D'$}\\
\nonumber  \text{subject to }&\,\,\, \text{Assumptions  {\bf (A1)} -- {\bf (A6)}},\\
\nonumber&\,\,\, d_t = g_t(S_t),\,t\in\T,\\
\nonumber&\,\,\, Z^g_t\in S_t,\,t\in\T,\\
\nonumber&\,\,\, S_{t+1} = f(S_t, d_t,a_t'),\,t\in\T.
\end{align}
where $\G' := \{g\,|\,g:=\{g_t,t\in\T\}, g_t:\Ss\to\D\text{ for all }t\in\T\}$.

\section{Dynamic Programming Solution for the Defender's Problem}
\label{sec:solution}

\subsection{The Dynamic Program}

We solve Problem (\ref{eq:defprobprime}) using dynamic programming. The dynamic program corresponding to Problem (\ref{eq:defprobprime}) is
\begin{align}
\label{eq:dynprog}
V(S) =  \min_{d\in \D} \max_{Z \in S}\bigg[C_{Z}+\hat{C}(d)+ \max_{S'\in \Q(S, d,Z)} \beta V(S')\bigg].
\end{align}
for every $S\in\Ss$ (see \cite{kumar1986stochastic,bertsekas1995dynamic}), where $\Q(S, d,Z)$ is the set of observer states that can be reached by $S$ when the defender's action is $d$ and the true system state in $\Ss$ is $Z$. The set $\Q(S, d,Z)$ is determined as follows. If at time $t$ the observer's state is $S$ and the defender takes action $d$ then, before the effect of $d$ at time $t^+$ and the observation at time $t^{++}$ are realized, there will be several potential candidate observer states at $t+1$. Only a subset of these possible observer states can occur when the true state of the system at time $t$ is $Z\in S$. This subset is $\Q(S, d,Z)$. We illustrate the form of the set $\Q(S, d,Z)$ by the following example.

\noindent {\bf Example 1.} Assume a network of three computers and a current observer state of
\begin{align*}
	S_t = \{ (F,N,N),\,(F, N, R),\,(F,R,N) \}.
\end{align*}
If the defender takes action $E_2$ then, before the effect of $E_2$ and the observation $H^{1,2}$ at $t^{++}$ are realized, the possible observer states $S_{t+1}$ are
\begin{align*}	
	\big\{\{(F,W,N),\,(F,W,R)\},\,\{(F,W,N)\}\big\}.
\end{align*}
If the true system state is $Z_t = (F,N,R)$ then 
\begin{align*}
	\Q(S_t,E_2,Z_t) = \{(F,W,N),\,(F,W,R)\}.
\end{align*}\hfill $\triangle$

\subsection{Solution of the Dynamic Program}

We obtain the solution of the dynamic program, Equation (\ref{eq:dynprog}), via \emph{value iteration} \cite{kumar1986stochastic,bertsekas1995dynamic}. For that matter, we define the operator $T$ by
\begin{align}
	TV(S) :=\min_{d\in \D} \max_{Z \in S}\bigg[C_{Z}+\hat{C}(d)+ \max_{S'\in \Q(S, d,Z)} \beta V(S')\bigg].\label{eq:mapT}
\end{align}
We prove the following result.
\begin{theorem}
\label{contraction}
The operator $T$, defined by Equation (\ref{eq:mapT}), is a contraction map.
\end{theorem}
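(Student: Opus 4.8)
The plan is to show that $T$ is a contraction with respect to the sup-norm on the space of bounded real-valued functions over the observer-state space $\Ss$. Since $\Ss$ is finite (its elements are subsets of the finite set $\Z$), this function space, equipped with $\|V\|_\infty := \max_{S\in\Ss}|V(S)|$, is a complete normed space, so it suffices to exhibit a constant $\beta\in(0,1)$ with
\[
\|TV-TW\|_\infty \le \beta\,\|V-W\|_\infty
\]
for all bounded $V,W$; here $\beta$ is the discount factor appearing in Problem (\ref{eq:defprobprime}), which we take to satisfy $0<\beta<1$.

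The argument rests on a single elementary fact: pointwise minimization and maximization over a common index set are non-expansive. Specifically, for any two real-valued functions $a(\cdot),b(\cdot)$ defined on the same finite set $\mathcal{I}$,
\[
\Big|\min_{x\in\mathcal{I}}a(x)-\min_{x\in\mathcal{I}}b(x)\Big|\le\max_{x\in\mathcal{I}}|a(x)-b(x)|,
\]
and the same inequality holds with $\min$ replaced by $\max$. First I would record these two inequalities, then apply them in sequence to the nested operators $\min_{d}$, $\max_{Z}$, and $\max_{S'}$ that define $TV$ in Equation (\ref{eq:mapT}).

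The crucial observation is that $V$ enters the expression only through the innermost term $\max_{S'\in\Q(S,d,Z)}\beta V(S')$, whereas $C_Z$, $\hat C(d)$, and --- most importantly --- the set $\Q(S,d,Z)$ itself are all independent of the value function (the latter is fixed by the system dynamics of Section \ref{sec:model}, not by $V$). Hence, for fixed $S$, $d$, and $Z$, the bracketed quantities for $V$ and for $W$ differ only in their innermost maxima over the \emph{same} set $\Q(S,d,Z)$, and the non-expansiveness of $\max$ yields
\[
\Big|\max_{S'\in\Q(S,d,Z)}\beta V(S')-\max_{S'\in\Q(S,d,Z)}\beta W(S')\Big|\le\beta\max_{S'\in\Q(S,d,Z)}|V(S')-W(S')|\le\beta\,\|V-W\|_\infty.
\]
Propagating this uniform bound outward through $\max_{Z\in S}$ and then $\min_{d\in\D}$, again by non-expansiveness, gives $|TV(S)-TW(S)|\le\beta\|V-W\|_\infty$ for every $S\in\Ss$, and taking the maximum over $S$ establishes the contraction inequality.

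I do not anticipate a genuine obstacle: this is the standard contraction property of a discounted Bellman-type operator, and the only point requiring care is the bookkeeping of the nested $\min$/$\max$ together with the verification that $\Q(S,d,Z)$ does not depend on the argument function, so that the innermost maxima are taken over identical index sets. Given $\beta<1$, the contraction follows at once, and by the Banach fixed-point theorem $T$ admits a unique fixed point to which value iteration converges.
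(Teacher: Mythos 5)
Your proof is correct, but it takes a genuinely different route from the paper's. The paper invokes Blackwell's sufficiency theorem, verifying its three conditions --- bounded value functions (showing $T$ maps the set of functions bounded by $M_3 > \frac{M_1+M_2}{1-\beta}$ into itself), monotonicity ($V_2 \ge V_1$ pointwise implies $TV_2 \ge TV_1$ pointwise), and discounting ($T(V+a) = TV + \beta a$ for constants $a$) --- and then concludes contraction without ever manipulating a norm directly. You instead establish the Lipschitz bound $\|TV - TW\|_\infty \le \beta\|V-W\|_\infty$ head-on, by propagating the non-expansiveness of $\min$ and $\max$ over common index sets through the three nested layers of the operator, with the key bookkeeping observation that $\Q(S,d,Z)$ is determined by the system dynamics and does not depend on the argument function, so the innermost maxima for $V$ and $W$ range over identical sets. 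Your approach is more self-contained (no appeal to an external theorem), exhibits the contraction modulus $\beta$ explicitly, and works on the full space of bounded functions rather than an invariant bounded subset; the paper's approach buys brevity of verification and reuses a standard tool familiar from discounted dynamic programming, where monotonicity and the constant-shift property are nearly immediate from the form of the operator. Both arguments are valid and both rely on the finiteness of $\Ss$, $\D$, and $\Q(S,d,Z)$ together with $0<\beta<1$.
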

\begin{proof}
We use Blackwell's sufficiency theorem (Theorem 5, \cite{blackwell1965}) to show that $T$ is a contraction mapping. We show:
\begin{enumerate}[i)]
\item \emph{Bounded value functions}: First, note that $|\Ss|,|\D|<\infty$, and that we have bounded costs, $C_{Z}\le M_1<\infty$, $\forall\, S\in\Ss$; $\hat{C}(d)\le M_2<\infty$, $\forall\,d\in\D$. Starting from any bounded value function, $V(S)\le M_3<\infty$ with $M_3 > \frac{M_1+M_2}{1-\beta}$ we have
\begin{align*}
TV(S)&= \min_{d\in \D} \max_{Z \in S}\bigg[C_{Z}+\hat{C}(d)+ \max_{S'\in \Q(S,d,Z)} \beta V(S')\bigg]\\ &\le M_1+M_2+\beta M_3 < M_3 < \infty
\end{align*}
for all $S\in\Ss$.\\
\item \emph{Monotonicity}: Assume $V_2(S) \ge V_1(S)$ $\forall\, S\in\Ss$. Then, for all $S\in\Ss, Z\in S$ and $d\in D$,
\begin{align*}
C_{Z}+\hat{C}(d)+ \max_{S'\in \Q(S,d,Z)} \beta V_2(S') \ge  C_{Z}+\hat{C}(d)+ \max_{S'\in \Q(S,d,Z)} \beta V_1(S')
\end{align*}
Therefore, for all $S\in\Ss$ and $d\in D$
\begin{align*}
&  \max_{Z \in S}\bigg[C_{Z}+\hat{C}(d)+ \max_{S'\in \Q(S,d,Z)} \beta V_2(S')\bigg]  \ge \nonumber\\
&  \max_{Z \in S}\bigg[C_{Z}+\hat{C}(d)+ \max_{S'\in \Q(S,d,Z)} \beta V_1(S')\bigg]
\end{align*}
Hence,
\begin{align*}
TV_2(S) &= \min_{d\in \D} \max_{Z \in S}\bigg[C_{Z}+\hat{C}(d)+ \max_{S'\in \Q(S,d,Z)} \beta V_2(S')\bigg]\\
&\ge  \min_{d\in \D} \max_{Z \in S}\bigg[C_{Z}+\hat{C}(d)+ \max_{S'\in \Q(S,d,Z)} \beta V_1(S')\bigg]=TV_1(S).
\end{align*}
\item \emph{Discounting}: Assume $V_2(S)=V_1(S)+a$. Then, for all $S\in\Ss$
\begin{align*}
TV_2(S) &=  \min_{d\in \D} \max_{Z \in S}\bigg[C_{Z}+\hat{C}(d)+ \max_{S'\in \Q(S, d,Z)} \beta (V_1(S')+a)\bigg]\\
&= \min_{d\in \D} \max_{Z \in S}\bigg[C_{Z}+\hat{C}(d)+ \max_{S'\in \Q(S, d,Z)} \beta V_1(S')\bigg]+\beta a\\
& =TV_1(S)+\beta a.
\end{align*}
\end{enumerate}
By Blackwell's sufficiency theorem, the operator $T$ is a contraction mapping.
\hfill$\square$
\end{proof}
Since $T$ is a contraction mapping, we can use value iteration to obtain the solution to Equation (\ref{eq:dynprog}), which we term the stationary value function, $V^*(S)$. 
From the stationary value function, we can obtain an optimal policy, $g^*$, as follows
\begin{align*}
	g^*(S) = \argmin_{d\in\D} \max_{Z \in S}\bigg[C_{Z}+\hat{C}(d)+ \max_{S'\in \Q(S, d,Z)} \beta V(S')\bigg]
\end{align*}
The optimal policy, $g^*(S)$, is not always unique. That is, for a given observer state $S\in\Ss$, there could be multiple $d\in\D$ which achieve the same minimum value of $ \min_{d\in \D}\max_{Z \in S}\bigg[C_{Z}+\hat{C}(d)+ \max_{S'\in \Q(S, d,Z)} \beta V(S')\bigg]$. We denote by $\D^*(S)$ the set of optimal actions for a given observer state $S$. In the event that $\D^*(S)$ is not a singleton for a given state $S$, we choose a single action $d^*(S)\in\D^*(S)$ based on a quantity we define as the \emph{confidentiality threat}. The confidentiality threat is a measure of the degree to which computer $i$ is presumed (by the defender) to be compromised and is defined as follows
\begin{align*}
	\tilde T_i = \sum_{Z\in S}c(z^i),\,\,\,i\in\N
\end{align*}
where $c(z^i)$, $z^i\in\M$, is the cost of the state, as defined in Equation (\ref{eq:computercost}), of the $i^{\text{th}}$ computer in the candidate system state $Z\in S$. Summing over all candidate system states in the observer state $S$ for a given computer $i$, we obtain the confidentiality threat $\tilde T_i$. Next, we compare the confidentiality threat of each computer and choose the action $d^*(S)\in\D^*(S)$ that corresponds to the highest confidentiality threat. In the case of equal confidentiality threats (which arise when the observer state is symmetric), we choose the action in $\D^*(S)$ corresponding to the computer with the lower index $i\in\N$.\footnote{This choice is arbitrary; we could randomize the choice as well.}

\section{Optimal Defender's Policy}
\label{sec:policy}

We now discuss the characteristics of the optimal policy for Problem (\ref{eq:defprobprime}), henceforth referred to as the \emph{optimal policy}. 
We illustrate sensitivity analysis via numerical results for both a two computer and a three computer network. We also discuss some qualitative observations of the optimal policy.

First we note that determining the set of observer states and its associated dynamics is not a trivial computational task, even for moderately sized networks. Our calculations show for the case of a two computer network, the defender's observer automaton consists of $87$ states and $1207$ transitions. Extending the system to a three computer network results in $1423$ states with $65602$ transitions. To automate the procedure, we have developed a collection of programs which makes use of the UMDES-LIB software library \cite{umdes1}. The specific procedure is discussed in Appendix A. 

The sensitivity analysis studies how the cost of re-imaging affects the optimal policy. For both the two computer and three computer networks, we increase the re-image cost, $\hat{C}(R_i) = r, \,\forall i\in\N$, and observe how the optimal policy behaves. Since the number of observer states in the two computer network, denoted $|\Ss_2|$, is modest, $|\Ss_2| = 87$, we are able to plot the behavior for each observer state $S\in\Ss_2$, as seen in Figure \ref{fig:reimage}(a).\footnote{The ``ordering" of these states is arbitrary.} In the three computer network, the size of observer state space, $|\Ss_3|=1423$, is much larger than that of the two computer network. As a result, we plot the percentage of observer states that have the optimal action $d$, for all $d\in\D$, and analyze how the percentage changes as we increase $r$, as seen in Figure \ref{fig:reimage}(b).
\begin{figure}[!htbp]
\subfloat[\emph{Two computer network.} Optimal actions for each observer state as a function of increasing re-image cost, $r=3,\ldots,30$.]{\includegraphics [width=0.45\textwidth,height=\textheight,keepaspectratio] {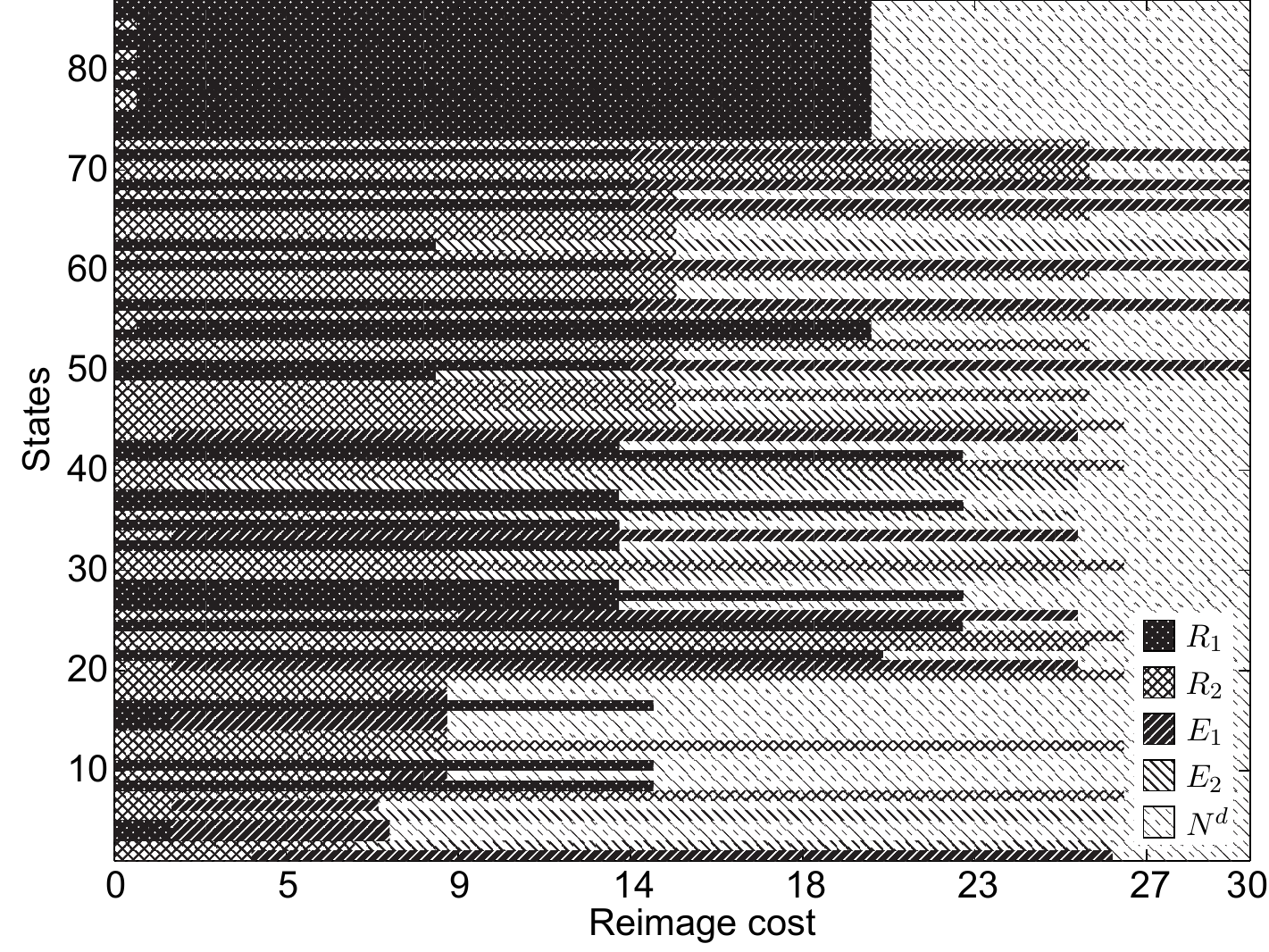}}\hspace{0.5cm}
\subfloat[\emph{Three computer network.} Percentage of observer states that have optimal actions $d\in\D$ as a function of increasing re-image cost, $r=0.2,\ldots,60$.]{\includegraphics [width=0.46\textwidth,height=\textheight,keepaspectratio] {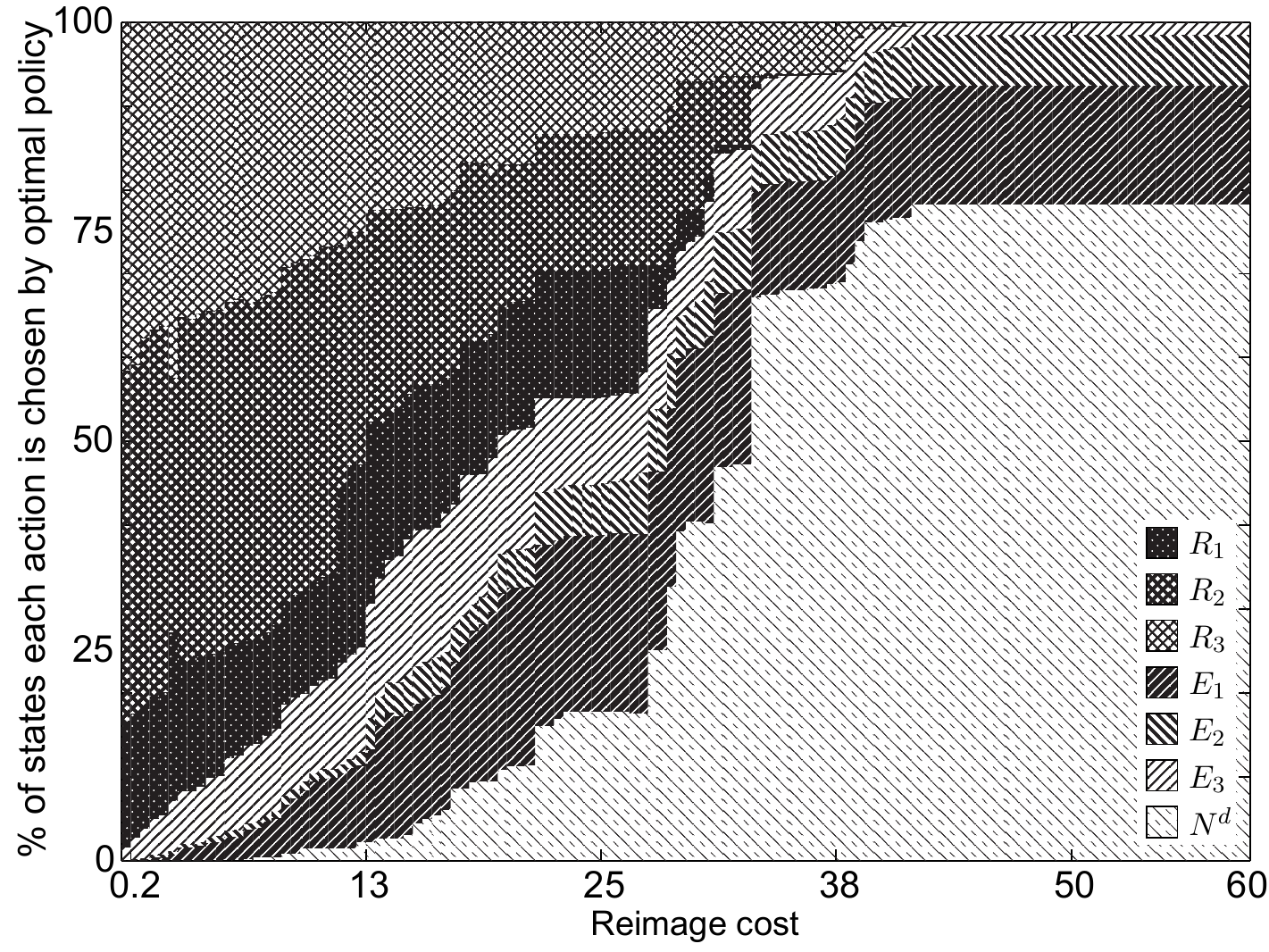}}
\caption{\small \sl Sensitivity analysis for varying re-image cost $r$, where $r = \Cu(R_i)$ for all $i\in\N$. Other parameters are $\Cu(N_d) = 0$, $\Cu(E_i) = 0.1$ $\forall\,i\in\N$, $c_N = 0$, $c_R = 1$, $c_W = 2$, $c_F = 8$, and $\beta = 0.9$.}
\label{fig:reimage}
\end{figure}

The behavior of the optimal policy due to increasing re-image costs, $r$, is intuitive. As $r$ increases, the optimal policy exhibits a threshold form,\footnote{In the simulations that we have performed.} switching from specifying more expensive actions to less expensive actions. For very low re-image costs, the optimal policy specifies $R_i$ in the majority of the observer states. As $r$ increases, observer states for which $R_i$ was optimal, switch to either sense, $E_i$, or null, $N^d$. Once the optimal action is null, it remains null for all higher values of $r$. For the observer states where the action switched to sense, a further increase in $r$ may result in a switch to null; however, there exist some observer states where the optimal action is sense for all higher values of $r$. This threshold behavior is clearly depicted in Figure \ref{fig:reimage}(a).

As a result of the aforementioned threshold behavior, for high enough values of $r$, the optimal policy eventually specifies $N^d$ or $E_i$ for all states $S\in\Ss$. The argument to see why there is no re-image action for high values of $r$ is straightforward; at these values of $r$ the cost of re-imaging is prohibitively expensive and the defender would rather incur the cost of being in a poor system state (see Equation (\ref{eq:statecost})).

An interesting (related) observation can be seen by analyzing the characteristics of the observer states and how these characteristics influence \emph{when} the policy undergoes a switch as $r$ increases. 
Consider Figure \ref{fig:reimage}(a), and observe the behavior of the optimal policy around the re-image cost of $r = 20$. There is a collection of observer states (with indices 74 -- 87) that contain the $(F,F)$ element (both computers are in the remote compromised state)  
where the optimal policy specifies a switch from re-image to null. In these observer states, the defender believes that the true system state is so poor that, even if the a computer were to be re-imaged, the events in $\A$ would cause the system to transition back to a poor state in so few iterations that the defender would just be wasting its resources by re-imaging. That is, the number of time steps that it takes for the system to return to a poor state is not high enough to justify the cost that the defender must incur to keep the system in a secure operating mode. For this reason, in these observer states, the defender exhibits the passive behavior of \emph{giving up} by choosing the cheapest action, $N^d$. An interesting related observation is that for other observer states in the system (the observer states that do not contain the element $(F,F)$) the optimal policy specifies a switch away from re-image at a higher re-image cost (around $r\in [25\,\,\, 26]$). In these observer states the defender views the process of securing the system as economically efficient because it can be returned to a secure operating mode in a small enough number of iterations (compared to the observer states that contain the system state $(F,F)$). This observed behavior reflects the fact that attacks are progressive and that time has value in our model.

Another observation is that there are sets of parameters for which the sense action is useful (as seen starting in Figure \ref{fig:reimage}(a) around $r=2$ and peaking in Figure \ref{fig:reimage}(b) around $r=25$). 
In these cases the act of sensing a computer results in a split observer state that has a lower future cost than if the defender were to choose either null or re-image.
Thus, paying the cost to sense can result in the defender having a better idea of the underlying system state and thus make a wiser decision on which future action to take. However, for low values of $r$, we can see that the defender prefers to re-image over obtaining a better estimate of the system (and similarly for high values of $r$, the defender prefers to take the null action). This behavior highlights the duality between estimation and control.

Interestingly, sensing remains an optimal action even for high values of $r$ when there is no re-image action prescribed in the optimal defense policy. In these cases, even though sensing does not change the state of the network, it refines the defender's information which then results in a lower future cost for the defender. Even though the sense action is more expensive than the null action, this lower future cost causes the defender to choose sense over null.

The intent of determining an optimal policy is to offer a set of procedures for the defender such that the network is able to be kept as secure as possible. After the defender specifies its costs for actions and costs for states, the optimal policy specifies a procedure that the defender should follow. For each action the defender takes, $d\in\D$, and for each event it observes, $a'\in\A'$, the resulting observer state is known through the dynamics of the observer state. For each of these observer states resulting from the sequence of defender actions and observed events, the optimal policy specifies whether to sense or re-image a particular computer, or to wait and do nothing. The resulting defender behavior will keep the network as secure as possible under the min-max cost criterion.

\section{Conclusion and Reflections}
\label{sec:conclusion}

In this paper we have proposed a supervisory control approach to dynamic cyber-security. We have taken the viewpoint of the defender whose task is to defend a network of computers against progressive attacks. Some of the attacker actions are unobservable by the defender, thus the defender does not have perfect knowledge of the true system state. We define an observer state for the defender to capture this lack of perfect knowledge.

We have assumed that the defender takes a conservative approach to preserving the security of the system. We have used the min-max performance criterion to capture the defender's conservative approach. 

Dynamic programming was used to obtain an optimal defender policy to Problem (\ref{eq:defprobprime}). The numerical results show that the optimal policy exhibits a threshold behavior when the cost of actions are varied. We have also observed the duality of estimation and control in our optimal policy.

We believe that our approach is suitable for modeling interactions between an attacker and a defender in general security settings. In general, we can use our approach to study dynamic defense against attacks in a network of $N$ resources each with $\M$ (orderable) security levels and $\M-1$ security boundaries. The attack actions can penetrate through some of these boundaries to compromise a resource, or use a compromised resource to attack other resources in the network. Some of these actions can be unobservable to the defender. On the other hand, the defender can take actions to change the state of resources to a more secure operating mode or sense the system state to obtain more refined information about the system's status. 

The model we have defined is rich enough to be extended to capture more complicated environments. Some examples of such environments can be heterogeneity of the network's computers\footnote{Placing an \emph{importance} weight on each computer.} or the introduction of a dummy computer\footnote{The dummy computer contains no sensitive information and is meant to mislead the attacker.} into the system so as to increase the network's resiliency to attacks. 

One bottleneck of our approach is that the number of states and transitions grows exponentially with the number of computers. One solution to this is to use a hierarchical decomposition for the system. For example an \emph{Internet Service Provider} (ISP) can model a collection of nodes in their network as one region (resource). Once a non-secure region is observed in the system, the ISP can more carefully analyze the nodes within that region and take appropriate actions. Approximate dynamic programming methods could also be useful in dealing with systems with a large number of computers.\\

\noindent{\bf Acknowledgement}\\

This work was supported in part by NSF grant CNS-1238962 and ARO MURI grant W911NF-13-1-0421. The authors are grateful to Eric Dallal for helpful discussions.\\

\noindent {\large{\bf A\quad Appendix -- UMDES-LIB}}\\

The UMDES-LIB library \cite{umdes1} is a collection of C-routines that was built to study discrete event systems that are modeled by finite state automata. Through specification of the states and events of a system automaton (along with the controllability and observability of events), the library can construct an entity termed the \emph{observer automaton.} In our problem the observer automaton is the defender's observer automaton, since we take the viewpoint of the defender. Thus, the observer automaton consists of the defender's observer states.

In this appendix we describe an automated process\footnote{Source code is available upon request.} for extracting the defender's observer state from the system automaton that makes use of UMDES-LIB. This requires first \emph{constructing the system automaton} in an acceptable format for the library while preserving all the features of our model. After running the library on the provided system automaton, we \emph{extract the defender's observer state} from the observer automaton output. 
This method allows one to construct the defender's observer state for any number of computers.\footnote{The only bottleneck being the (potentially large) dimensionality of the problem.}\\

\noindent {\bf Constructing the System Automaton.} The input that we provide to UMDES-LIB is the system automaton from the defender's viewpoint, as illustrated earlier in Figure \ref{fig:bipartite}. 

In order to preserve all features of our model in the resulting observer automaton, we need to introduce additional sensing actions. Recall that the sense action, $\{E_i\}_{i\in\N}$, causes the system automaton to transition to the same state as the null action, $N^d$ (see Figure \ref{fig:bipartite}). However, as stated in Section \ref{sec:model}, the sense action updates the information state of the defender. In order to ensure that UMDES-LIB captures this functionality, we expand the sense action $E_i$ for each computer $i$ into $|\M|$ distinct actions, denoted by $E_i^{z^i}$, which represent sensing computer $i$ when it is in state $z^i\in\M$. This results in a reduced level of uncertainty for the defender as it splits the observer state into, at most, $|\M|$ possible sets of observer states. The admissible actions from $\{E_i^{z^i}\}_{z^i\in\M}$, at a given system state, are the sense actions that correspond to the true system state. For example, from the system state $Z_t = (N,R,W)$, the admissible sense actions are $E_1^{N}$, $E_2^{R}$, and $E_3^{W}$. The above example of the expanded sense action is perhaps worrisome at first glance -- if the only admissible sense actions from the current state are the ones that correspond to the current state of the computer, then the defender will know what the current state of each computer is, eliminating the need for a sense action. However, the observer state that is obtained from each expanded sense action is the same as the observer state that is obtained if the defender were to observe the true, unknown state of a computer. 

Running UMDES-LIB on the system automaton with the expanded sense actions results in the observer automaton.\\

{\bf Extracting the Defender's Observer State.} The output of UMDES-LIB is the observer automaton, from which we must extract the defender's observer state. 
First, since the defender does not have the ability to choose the expanded sense actions, $E_i^{z^i}$, we re-group them into a single, non-deterministic action, $E_i\in\D$, for each $i\in\N$. 
Next, we need to extract the function, $f: \Ss\times \D \times \A' \rightarrow \Ss$ from the observer automaton. The observer automaton, generated by UMDES-LIB, takes the form of a bipartite graph; one collection of states of the bipartite graph is observer states over system states $\Z$, denoted $\Ss$, whereas the other collection is observer states over intermediate states $\tilde \Z$, denoted $\tilde \Ss$. 
Defender actions, $\D$, are the only admissible actions from observer states $\Ss$. The defense action $d\in\D$ causes a transition\footnote{This transition may be non-deterministic due to the sense action.} to an observer state in $\tilde \Ss$, where only events in $\A'$ are admissible. Each event $a'\in\A'$ causes a transition back to an observer state in $\Ss$. Repeating this process for all observer states in $\Ss$, actions $d\in\D$, and events $a'\in\A'$, the function $f: \Ss\times \D \times \A' \rightarrow \Ss$ is defined. 
To construct the set $\Q(S, d, Z)$ we follow the approach described in Section 4.1 and illustrated by Example 1.

\bibliographystyle{abbrv}
\bibliography{references}
\end{document}